\documentclass[a4paper,UKenglish,cleveref, autoref, thm-restate]{lipics-v2021}

\usepackage{amsmath}
\usepackage{amsfonts}
\usepackage{amssymb}
\usepackage{amsmath}
\usepackage{algpseudocode}
\usepackage{tikz}
\usepackage{xspace}
\usepackage[linesnumbered, vlined, ruled]{algorithm2e}
\usepackage{stmaryrd}
\usepackage{float}
\usepackage{enumitem}
\usepackage{trimclip}
\usepackage{mathtools}

\newtheorem{thm}{Theorem}
\newtheorem{lem}[thm]{Lemma}

\title{Formal Primal-Dual Algorithm Analysis}

\titlerunning{Formal Primal-Dual Algorithm Analysis}

\author{Mohammad Abdulaziz}{King's College London, United Kingdom  \and \url{https://mabdula.github.io/}}{mohammad.abdulaziz@kcl.ac.uk}{https://orcid.org/0000-0002-8244-518X}{}
\author{Thomas Ammer}{King's College London, United Kingdom \and \url{https://toamme.github.io/}}{thomas.ammer@kcl.ac.uk}{https://orcid.org/0009-0001-5301-4620}{}
\author{Christoph Madlener}{ }{christoph.madlener96@gmail.com}{https://orcid.org/0000-0002-9577-0061}{}

\authorrunning{M. Abdulaziz, T. Ammer and C. Madlener}

\Copyright{Mohammad Abdulaziz and Thomas Ammer}

\ccsdesc[500]{Theory of computation~Discrete optimization}
\ccsdesc[300]{Theory of computation~Invariants}
\ccsdesc[500]{Theory of computation~Program verification}

\keywords{Bipartite Matching, Graph Algorithms, Isabelle/HOL, Formal Verification}

\nolinenumbers 
\EventNoEds{2}
\EventShortTitle{CVIT 2016}
\EventAcronym{CVIT}
\EventYear{2016}
\EventDate{December 24--27, 2016}
\EventLocation{Little Whinging, United Kingdom}
\EventLogo{}
\SeriesVolume{42}
\ArticleNo{23}

\lstdefinestyle{inline}{%
    basicstyle=\ttfamily%
}

\lstdefinelanguage{Isabelle}{ morekeywords={for,context,inductive,begin,end,locale,fixes,record,type_synonym,definition,fun,function,primrec,where,lemma,theorem,unfolding,by,
shows,assumes,and,datatype,using,abbreviation,moreover,have,hence,thus,qed,proof,let,ultimately,show,next,in,if,value, thm,corollary,else,then}
    , sensitive=true
    , showstringspaces=true
    , framerule=0pt
    , xleftmargin=2em
    , numbers=left
    , numberstyle=\ttfamily\small
    , firstnumber=1
    , stepnumber=2
    , basicstyle=\ttfamily\small
    , backgroundcolor = \color{white}
    , keywordstyle = {\color{blue}}
    , breaklines=true
    , showspaces=false
    , morecomment=[l]{--}
    , morecomment=[s]{(*}{*)}
    , commentstyle=\color{gray}
    , morestring=[b]"
    , literate={↦}{{$\mapsto$}}{1}
               {∧}{{$\wedge$}}{1}
               {×}{{$\times$}}{1}
               {≡}{{$\equiv$}}{1}
               {∀}{{$\forall$}}{1}
               {∃}{{$\exists$}}{1}  {\\<times>}{{$\times$}}{1}
               {\\<and>}{{$\land$}}{1} {\\<inter>}{{$\cap$}}{1}
        {∈}{{$\in$}}{1} {⇒}{{$\Rightarrow$}}{1} {\\<Rightarrow>}{{$\Rightarrow$}}{1} {\\<And>}{{$\bigwedge$}}{1} {\\<forall>}{{$\forall$}}{1}   {\\<in>}{{$\in$}}{1} {\\<exists>}{{$\exists$}}{1}
        {\\<subseteq>}{{$\subseteq$}}{1} {\\<longrightarrow>}{{$\longrightarrow$}}{1} {\\<Longrightarrow>}{{$\Longrightarrow$}}{1}
        {\\<notin>}{{$\not\in$}}{1}
        {λ}{{$\lambda$}}{1} {\\<lambda>}{{$\lambda$}}{1} {::}{{$::$}}{1} {\\<epsilon>}{{$\epsilon$}}{1} {\\<alpha>}{{$\alpha$}}{1}
        {\⊆}{{$\subseteq$}}{1} {\\<subset>}{{$\subset$}}{1} {\\<^sub>m}{{$_m$}}{1} {\\<longleftrightarrow>}{{$\longleftrightarrow$}}{3}
        {\\<pi>}{{$\pi$}}{1} {\\<delta>}{{$\delta$}}{1} {⟦}{{$\llbracket$}}{1} {⟧}{{$\rrbracket$}}{1} {\\<lbrakk>}{{$\llbracket$}}{1} {\\<rbrakk>}{{$\rrbracket$}}{1}
        {⟹}{{$\Longrightarrow$}}{3} {\\<not>}{{$\lnot$}}{1} {\\<le>}{{$\le$}}{1} {\\<rightharpoonup>}{{$\rightharpoonup$}}{2}
        {\\<^sub>\\<V>}{{$_{\mathcal V}$}}{1} {\\<lparr>}{{$\llparenthesis$}}{1} {\\<rparr>}{{$\rrparenthesis$}}{1}
        {\\<leftarrow>}{{$\leftarrow$}}{1} {\\<^sub>\\<O>}{{$_{\mathcal O}$}}{1} {\\<^sub>I}{{$_{\texttt{I}}$}}{1}
        {\\<^sub>G}{{$_{\texttt{G}}$}}{2} {\\<phi>}{{$\varphi$}}{1} {\\<Phi>}{{$\Phi$}}{1} {\\<psi>}{{$\psi$}}{1} {\\<Psi>}{{$\Psi$}}{1}
        {\\<^sub>S}{{$_{\texttt S}$}}{1} {\\<inverse>}{{$^{-1}$}}{1} {\\<^sub>O}{{$_{\texttt O}$}}{1} {\\<^bold>\⋀}{{$\bm\bigwedge$}}{1} {⋀}{{$\bigwedge$}}{1}
        {\\<^bold>\\<or>}{{$\bm\lor$}}{1} {\\<^sub>G}{{$_{\texttt G}$}}{1} {\\<Pi>}{{$\Pi$}}{1} {\\<^sub>I}{{$_{\texttt I}$}}{1} {\≠}{{$\neq$}}{1}
        {\\<bottom>}{{$\bot$}}{1} {\\<^sub>+}{{$_\texttt +$}}{1} {\\<^bold>\\<and>}{{$\bm\land$}}{1} {\\<^bold>\\<not>}{{$\bm\lnot$}}{1}
        {\\<^sub>1}{{$_1$}}{1} {\\<subpset>1}{{$\supset$}}{1} {\\<^sub>2}{{$_2$}}{1} {\\<A>}{{$\mathcal A$}}{1} {\\<Turnstile>}{{$\models$}}{2} {\\<^sub>∀}{{$_\forall$}}{1} {\\<noteq>}{{$\neq$}}{1} 
        {\\<^sub>0}{{$_0$}}{1} {\\<tau>}{{$\tau$}}{1}  {\\<^sub>\\<Omega>}{{$_\Omega$}}{1} {\\<^sub>V}{{$_V$}}{1} {\\<^bold>\\<Or>}{{$\bm\bigvee$}}{1}
        {\\<^sub>P}{{$_\texttt P$}}{1} {\\<^sub>X}{{$_\texttt X$}}{1} {⟹}{{$\Longrightarrow$}}{2} {\\<or>}{{$\lor$}}{1} {\\<^sub>\\<pi>}{{$_\pi$}}{1}
        {\\<^sub>s}{{$_s$}}{1} {\\<^sub>t}{{$_t$}}{1} {\\<^sub>a}{{$_a$}}{1} {\\<^sub>r}{{$_r$}}{1} {\\<^sub>t}{{$_t$}}{1} {\\<^sub>e}{{$_e$}}{1} {\\<^sub>n}{{$_n$}}{1} {\\<^sub>d}{{$_d$}}{1} {\\<^sub>i}{{$_i$}}{1} {\\<^sub>v}{{$_v$}}{1} {\\<^sub>j}{{$_j$}}{1} {\\<^sub>b}{{$_b$}}{1} {\∩}{{$\cap$}}{1} {\\<union>}{{$\cup$}}{1} {\\<Union>}{{$\bigcup$}}{1} {\\<^sup>c\\<TTurnstile>\\<^sub>=}{{${}^c\models_=$}}{1}
        {\\<open>}{{<}}{1} {\\<close>}{{>}}{1} {\\<langle>}{{$\langle$}}{1} {\\<rangle>}{{$\rangle$}}{1} {\\<ge>}{{$\ge$}}{1} {\\<^sup>-\\<^sup>1\\<^sub>C}{{$^{\texttt{-1}}_\texttt{C}$}}{2} {\\<^sup>+\\<^sub>C}{{$^\texttt{+}_\texttt{C}$}}{1} {\\<circ>\\<^sub>C}{{$\circ^\texttt C$}}{1} {\\<top>\\<^sub>C}{{$\top_\texttt{C}$}}{2} {\\<bottom>\\<^sub>C}{{$\bot_\texttt{C}$}}{2} {\\<not>\\<^sub>C}{{$\neg_\texttt{C}$}}{2}
        {\\<squnion>\\<^sub>C}{{$\sqcup_\texttt{C}$}}{2} {\\<sqinter>\\<^sub>C}{{$\sqcap_\texttt{C}$}}{2} {\∃\\<^sub>C}{{$\exists_\texttt{C}$}}{2}  {∀\\<^sub>C}{{$\forall_\texttt{C}$}}{2} {=\\<^sub>C}{{$=_\texttt{C}$}}{2} {\\<sigma>}{{$\sigma$}}{2} {\∉}{{$\notin$}}{1} {⊕}{{$\oplus$}}{1} {\\<nexists>}{{$\nexists$}}{1} {\\<setminus>}{{$\setminus$}}{1} {\ }{\ }{1} {\\<bullet>}{$\bullet$}{1} {\\<^sup>T}{$^T$}{1} {_}{-}{1}
        {"}{}{1} {\\pi^*}{$\pi^*$}{1} {\\<pi>\\<^sup>*}{$\pi^*$}{1} {\\<V>}{$\mathcal{V}$}{1} {\\<E>}{$\mathcal{E}$}{1} {\\<M>}{$\mathcal{M}$}{1} {\\<F>}{$\mathcal{F}$}{1} {\\<bar>}{$|$}{1} {\\of}{{\color{blue}of}}{1} {\\and}{{\color{blue}and}}{1}
        {\\<nu>}{$\nu$}{1}
}

\begin{document}
\maketitle

\begin{abstract}
We present an ongoing effort to build a framework and a library in Isabelle/HOL for formalising primal-dual arguments for the analysis of algorithms.
We discuss a number of example formalisations from the theory of matching algorithms, covering classical algorithms like the Hungarian Method, widely considered the first primal-dual algorithm, and modern algorithms like the Adwords algorithm, which models the assignment of search queries to advertisers in the context of search engines.
\end{abstract}

\providecommand{\insts}{}
\renewcommand{\insts}{\ensuremath{\Delta}}
\providecommand{\inst}{\ensuremath{\tvsal}}
\newcommand{\act}{\ensuremath{\pi}}
\newcommand{\asarrow}[1]{\vec{#1}}
\renewcommand{\vec}[1]{\overset{\rightarrow}{#1}}
\newcommand{\as}{\ensuremath{\vec{{\act}}}}

\newcommand{\etc}{\textit{etc.}}
\newcommand{\versus}{\textit{vs.}}

\newcommand{\Ie}{I.e.}
\newcommand*{\ie}{i.e.\ }
\newcommand{\abziz}[1]{\textcolor{brown}{#1}}
\newcommand{\sublist}[2]{ \ensuremath{#1} \preceq\!\!\!\raisebox{.4mm}{\ensuremath{\cdot}}\; \ensuremath{#2}}
\newcommand{\subscriptsublist}[2]{\ensuremath{#1}\preceq\!\raisebox{.05mm}{\ensuremath{\cdot}}\ensuremath{#2}}
\newcommand{\PLS}{\Pi^\preceq\!\raisebox{1mm}{\ensuremath{\cdot}}}
\newcommand{\PLScharles}{\Pi^d}
\newcommand{\execname}{\mathsf{ex}}
\newcommand{\IndHyp}{\mathsf{IH}}
\newcommand{\exec}[2]{#2(#1)}

\newcommand{\ancestorssymbol}{\textsf{\upshape ancestors}}
\newcommand{\ancestors}{\ancestorssymbol}
\newcommand{\satpreas}[2]{\ensuremath{sat_precond_as(s, \as)}}
\newcommand{\proj}[2]{\ensuremath{#1{\downharpoonright}_{#2}}}
\newcommand{\dep}[3]{\ensuremath{#2 {\rightarrow} #3}}
\newcommand{\deptc}[3]{\ensuremath{#2 {\rightarrow^+} #3}}
\newcommand{\negdep}[3]{\ensuremath{#2 \not\rightarrow #3}}
\newcommand{\leavessymbol}{\textsf{\upshape leaves}}
\newcommand{\leaves}{\leavessymbol}

\newcommand{\childrensymbol}{\textsf{\upshape children}}
\newcommand{\children}[2]{\mathcal{\childrensymbol}_{#2}(#1)}
\newcommand{\succsymbol}{\textsf{\upshape succ}}
\newcommand{\succstates}[2]{\succsymbol(#1, #2)}
\newcommand{\concat}{\#}
\newcommand{\RG}{\cite{Rintanen:Gretton:2013}\ }
\newcommand{\cupdot}{\charfusion[\mathbin]{\cup}{\cdot}}
\newcommand{\bigcupdot}{\charfusion[\mathop]{\bigcup}{\cdot}}
\newcommand{\cuparrow}{\charfusion[\mathbin]{\cup}{{\raisebox{.5ex} {\smathcalebox{.4}{\ensuremath{\leftarrow}}}}}}
\newcommand{\bigcuparrow}{\charfusion[\mathop]{\bigcup}{\leftarrow}}
\newcommand{\finiteunion}{\cuparrow}
\newcommand{\finitemap}{\ensuremath{\sqsubseteq}}
\newcommand{\dgraph}{dependency graph}
\newcommand{\domain}[1]{{\sc #1}}
\newcommand{\solver}[1]{{\sc #1}}
\providecommand{\problem}[1]{\domain{#1}}
\renewcommand{\v}{\ensuremath{\mathit{v}}}
\providecommand{\vs}[1]{\domain{#1}}
\renewcommand{\vs}{\ensuremath{\mathit{vs}}}
\newcommand{\VS}{\ensuremath{\mathit{VS}}}
\newcommand{\Aut}{\ensuremath{\mathit{Aut}}}
\newcommand{\Inst}[2]{\ensuremath{\mathit{#2 \rightarrow_{#1} #1}}}
\newcommand{\Image}{\ensuremath{\mathit{Im}}}
\newcommand{\Img}[2]{\protect{#1 \llparenthesis #2 \rrparenthesis}}
\newcommand{\SND}{\ensuremath{\mathit{\pi_2}}}
\newcommand{\FST}{\ensuremath{\mathit{\pi_1}}}
\newcommand{\tvsal}{{\pitchfork}}
\newcommand{\nauty}{CGIP}

\newcommand{\pwinter}{\ensuremath{\mathit{\bigcap_{pw}}}}

\newcommand{\dom}{\ensuremath{\mathit{\mathcal{D}}}}
\newcommand{\codom}{\ensuremath{\mathcal{R}}}

\newcommand{\map}{\ensuremath{\mathit{map}}}
\newcommand{\BIJEC}{\ensuremath{\mathit{bij}}}
\newcommand{\INJ}{\ensuremath{\mathit{inj}}}
\newcommand{\funion}{\ensuremath{\overset{\leftarrow}{\cup}}}

\newcommand{\ifnew}{\mbox{\upshape \textsf{if}}}
\newcommand{\thennew}{\mbox{\upshape \textsf{then}}}
\newcommand{\elsenew}{\mbox{\upshape \textsf{else}}}
\newcommand{\choice}{\mbox{\upshape \textsf{ch}}}
\newcommand{\arbchoice}{\mbox{\upshape \textsf{arb}}}
\newcommand{\acycchoice}{\mbox{\upshape \textsf{ac}}}
\newcommand{\cycchoice}{\mbox{\upshape \textsf{cyc}}}
\newcommand{\filter}{\ensuremath{\mathit{FIL}}}
\newcommand{\probset}{\ensuremath{\boldsymbol \Pi}}
\newcommand{\probleq}{\ensuremath{\leq_\Pi}}
\newcommand{\CommVar}{\ensuremath{\bigcap_\v} }
\newcommand{\quotfun}{\ensuremath{ \mathcal{Q}}}

\newcommand{\apre}{\mbox{\upshape \textsf{pre}}}
\newcommand{\aeff}{\mbox{\upshape \textsf{eff}}}
\newcommand{\problist}{\ensuremath \probset}
\newcommand{\cat}{{\frown}}
\newcommand{\probproj}[2]{{#1}{\downharpoonright}^{#2}}
\newcommand{\preced}{\mathbin{\rotatebox[origin=c]{180}{\ensuremath{\rhd}}}}
\newcommand{\perm}{\ensuremath{\sigma}}
\newcommand{\invstates}[1]{\ensuremath{\mathit{inv({#1})}}}
\newcommand{\probss}[1]{{\mathcal S}(#1)}
\newcommand{\parChildRel}[3]{\ensuremath{\negdep{#1}{#2}{#3}}}
\newcommand{\asessymbol}{\ensuremath{\mathbb{A}}}
\newcommand{\ases}[1]{{#1}^*}
\newcommand{\uniStates}{\ensuremath{\mathbb{U}}}
\newcommand{\recurrenceDiam}{\ensuremath{\mathit{rd}}}
\newcommand{\recurrenceAcycDiamfun}{\ensuremath{\mathit{{\mathfrak A}}}}
\newcommand{\recurrenceDiamfun}{\ensuremath{\mathit{\mathfrak R}}}
\newcommand{\traversalDiam}{\ensuremath{\mathit{td}}}
\newcommand{\traversalDiamfun}{\ensuremath{\mathit{\mathfrak T}}}
\newcommand{\isPrefix}[2]{\ensuremath{#1 \preceq #2}}
\newcommand{\aspath}{\ensuremath{\vec{\path}}}
\newcommand{\n}{\textsf{\upshape n}}
\providecommand{\graph}{}
\renewcommand{\graph}{\ensuremath{\mathcal{E}}\xspace}
\newcommand{\undirgraph}{{\cal G}}

\renewcommand{\ss}{\ensuremath{\state s}}
\newcommand{\slist}{\ensuremath{\vec{\mbox{\upshape \textsf{ss}}}}}
\newcommand{\sll}{\ensuremath{\vec{\state}}}
\newcommand{\listset}{\mbox{\upshape \textsf{set}}}
\newcommand{\asset}{\ensuremath{\mathit{K}}}
\newcommand{\aslist}{\ensuremath{\mathit{\overset{\rightarrow}{\gamma}}}}
\newcommand{\head}{\mbox{\upshape \textsf{first}}}
\renewcommand{\max}{\textsf{\upshape max}}
\newcommand{\argmax}{\textsf{\upshape argmax}}
\newcommand{\argmin}{\textsf{\upshape argmin}}
\renewcommand{\min}{\textsf{\upshape min}}
\newcommand{\bool}{\mbox{\upshape \textsf{bool}}}
\newcommand{\last}{\mbox{\upshape \textsf{last}}}
\newcommand{\front}{\mbox{\upshape \textsf{front}}}
\newcommand{\rot}{\mbox{\upshape \textsf{rot}}}
\newcommand{\stuff}{\mbox{\upshape \textsf{intlv}}}
\newcommand{\tail}{\mbox{\upshape \textsf{tail}}}
\newcommand{\ngrtoas}{\ensuremath{\mathit{\as_{\graph_\mathbb{N}}}}}
\newcommand{\vsfun}{\mbox{\upshape \textsf{vs}}}
\newcommand{\inits}{\mbox{\upshape \textsf{init}}}
\newcommand{\satprecondas}{\mbox{\upshape \textsf{sat-pre}}}
\newcommand{\remcondlessact}{\mbox{\upshape \textsf{rem-condless}}}
\providecommand{\state}{}
\renewcommand{\state}{x}
\newcommand{\statea}{x}
\newcommand{\stateb}{y}
\newcommand{\statec}{z}
\newcommand{\fals}{\mbox{\upshape \textsf{F}}}
\newcommand{\indices}{\ensuremath{V}}
\newcommand{\edges}{\ensuremath{E}}
\newcommand{\vertices}{\ensuremath{V}}
\newcommand{\listtype}{\mbox{\upshape \textsf{list}}}
\newcommand{\settype}{\mbox{\upshape \textsf{set}}}
\newcommand{\acttype}{\mbox{\upshape \textsf{action}}}
\newcommand{\graphtype}{\mbox{\upshape \textsf{graph}}}
\newcommand{\projfun}[2]{\ensuremath{\Delta_{#1}^{#2}}}
\newcommand{\snapfun}[2]{\ensuremath{\Sigma_{#1}^{#2}}}
\newcommand{\RDfun}[1]{\ensuremath{{\mathcal R}_{#1}}}
\newcommand{\elldbound}[1]{\ensuremath{{\mathcal LS}_{#1}}}
\newcommand{\distinct}{\textsf{\upshape distinct}}
\newcommand{\ddistinct}{\mbox{\upshape \textsf{ddistinct}}}
\newcommand{\simple}{\mbox{\upshape \textsf{simple}}}

\newcommand{\reachable}[3]{\ensuremath{{#1}\rightsquigarrow{#3}}}

\newcommand{\Omit}[1]{}

\newcommand{\charles}[1]{\textcolor{red}{#1}}

\newcommand{\negreachable}[3]{\ensuremath{{#2}\not\rightsquigarrow{#3}}}
\newcommand{\wdiam}[2]{{#1}^{#2}}
\newcommand{\dsnapshot}[2]{\Delta_{#1}}
\newcommand{\ellsnapshot}[2]{{\mathcal L}_{#1}}

\newcommand{\snapshotsymbol}{|\kern-.7ex\raise.08ex\hbox{\scalebox{0.7}{$\bullet$}}}
\newcommand{\snapshot}[2]{\ensuremath{\mathrel{#1\snapshotsymbol_{#2}}}}
\newcommand{\vstype}{\texttt{\upshape VS}}
\newcommand{\vtype}{{\scriptsize \ensuremath{\dom(\delta)}}}
\newcommand{\Balgo}{{\mbox{\textsc{Hyb}}}}
\newcommand{\ssgraph}[1]{\graph_\ss}
\newcommand{\agree}{\textsf{\upshape agree}}
\newcommand{\ck}{\ensuremath{\texttt{ck}}}
\newcommand{\lk}{\ensuremath{\texttt{lk}}}
\newcommand{\gr}{\ensuremath{\texttt{gr}}}
\newcommand{\gk}{\ensuremath{\texttt{gk}}}
\newcommand{\CK}{\ensuremath{\texttt{CK}}}
\newcommand{\LK}{\ensuremath{\texttt{LK}}}
\newcommand{\GR}{\ensuremath{\texttt{GR}}}
\newcommand{\GK}{\ensuremath{\texttt{GK}}}
\newcommand{\safe}{\ensuremath{\texttt{s}}}

\newcommand{\derivname}{\ensuremath{\partial}}
\newcommand{\deriv}[3]{\ensuremath{\derivname(#1,#2,#3)}}
\newcommand{\derivabbrev}[3]{\ensuremath{{\partial(#1,#2)}}}
\newcommand{\subsetoracle}{\ensuremath{ \Omega}}
\newcommand{\Aalgo}{{\mbox{\textsc{Pur}}}}
\newcommand{\Sname}{\textsf{\upshape S}}
\newcommand{\Sbrace}[1]{\Sname\langle#1\rangle}
\newcommand{\SalgoName}{\Sname_{\textsf{\upshape max}}}
\newcommand{\Salgo}[1]{\SalgoName\langle#1\rangle}

\newcommand{\WLPname}{{\mbox{\textsc{wlp}}}}
\newcommand{\WLPbrace}[1]{\WLPname\langle#1\rangle}
\newcommand{\WLPalgoName}{\WLPname_{\textsf{\upshape max}}}
\newcommand{\WLP}[1]{\WLPalgoName\langle#1\rangle}

\newcommand{\Nname}{\ensuremath{\textsf{\upshape N}}}
\newcommand{\Nbrace}[1]{\Nname\langle#1\rangle}
\newcommand{\NalgoName}{\Nname{_{\textsf{\upshape sum}}}}
\newcommand{\Nalgobrace}[1]{\NalgoName\langle#1\rangle}

\newcommand{\acycNname}{\widehat{\textsf{\upshape N}}}
\newcommand{\acycNbrace}[1]{\acycNname\langle#1\rangle}
\newcommand{\acycNalgoName}{\acycNname{_{\textsf{\upshape sum}}}}
\newcommand{\acycNalgobrace}[1]{\acycNalgoName\langle#1\rangle}

\newcommand{\Mname}{\ensuremath{\textsf{\upshape M}}}
\newcommand{\Mbrace}[1]{\Mname\langle#1\rangle}
\newcommand{\MalgoName}{\Mname{_{\textsf{\upshape sum}}}}
\newcommand{\Malgobrace}[1]{\MalgoName\langle#1\rangle}
\newcommand{\cardinality}[1]{{\ensuremath{|#1|}}}
\newcommand{\length}[1]{\cardinality{#1}}
\newcommand{\basecasefun}{\ensuremath{b}}
\newcommand{\Basecasefun}{\ensuremath{\mathcal B}}

\newcommand{\edgegen}{\ensuremath{e}}
\newcommand{\vertexgen}{\ensuremath{u}}
\newcommand{\vertexa}{{\ensuremath{\vertexgen_1}}}
\newcommand{\vertexb}{{\ensuremath{\vertexgen_2}}}
\newcommand{\vertexc}{{\ensuremath{\vertexgen_3}}}
\newcommand{\vertexd}{{\ensuremath{\vertexgen_4}}}
\newcommand{\vertexe}{{\ensuremath{\vertexgen_5}}}
\newcommand{\vertexf}{{\ensuremath{\vertexgen_6}}}
\newcommand{\vertexg}{{\ensuremath{\vertexgen_7}}}
\newcommand{\vertexsetgen}{\ensuremath{\mathit{us}}}
\newcommand{\vertexseta}{\vertexsetgen_1}
\newcommand{\vertexsetb}{\vertexsetgen_2}
\newcommand{\labelsymbol}{\ensuremath{l}}
\newcommand{\labelfun}{\ensuremath{\mathcal{L}}}
\newcommand{\DAG}{\ensuremath{A}}
\newcommand{\NalgoNameN}{{\ensuremath{\NalgoName_{\mathbb{N}}}}}
\newcommand{\NnameN}{\ensuremath{\Nname_\mathbb{N}}}
\newcommand{\replaceprojsinglename}{\raisebox{-0.3mm} {\scalebox{0.7}{\textpmhg{H}}}}
\newcommand{\replaceprojsingle}[3] {{ #2} \underset {#1} {\raisebox{-0.3mm} {\scalebox{0.7}{\textpmhg{H}}}}  #3}
\newcommand{\HOLreplaceprojsingle}[1]{\underset {#1} {\raisebox{-0.3mm} {\scalebox{0.7}{\textpmhg{H}}}}}

\newcommand{\lotus}{{\scalebox{0.6}{\includegraphics{lotus.pdf}}}}
\newcommand{\invlotus}{\mathbin{\rotatebox[origin=c]{180}{$\lotus$}}}
\newcommand{\clique}{\ensuremath{K}}
\newcommand{\partition}{\ensuremath{\vs_{1..n}}}
\newcommand{\partitiontype}{\ensuremath{\vstype_{1..n}}}
\newcommand{\vtxpartition}{\ensuremath{P}}

\newcommand{\traversalDiamAlgo}{{\mbox{\textsc{TravDiam}}}}
\newcommand{\prefix}{\textsf{\upshape pfx}}
\newcommand{\powerset}{\mathbb{P}}
\newcommand{\postfix}{\textsf{\upshape sfx}}
\newcommand{\dfunproj}{\ensuremath{{\mathfrak D}}}
\newcommand{\dfunsnap}{\ensuremath{{\textgoth D}}}
\newcommand{\ellfunproj}{\ensuremath{\mathfrak L}}
\newcommand{\ellfunsnap}{\ensuremath{\textgoth L}}
\newcommand{\cycle}{\ensuremath{C}}
\newcommand{\petal}{\ensuremath{\eta}}
\renewcommand{\prod}{\ensuremath{{{{{\mathlarger{\mathlarger {{\mathlarger {\Pi}}}}}}}}}}
\newcommand{\sccset}{{\ensuremath{SCC}}}
\newcommand{\scc}{{\ensuremath{scc}}}
\newcommand{\negate}[1]{\overline{#1}}
\newcommand{\setofsets}{\ensuremath{S}}
\newcommand{\group}{\ensuremath{\cal \Gamma}}
\newcommand{\neededvars}{{\cal N}}
\newcommand{\sspace}{\mbox{\upshape \textsf{sspc}}}
\newcommand{\tip}{\ensuremath{t}}
\newcommand{\vara}{\ensuremath{\v_1}}
\newcommand{\varb}{\ensuremath{\v_2}}
\newcommand{\varc}{\ensuremath{\v_3}}
\newcommand{\vard}{\ensuremath{\v_4}}
\newcommand{\vare}{\ensuremath{\v_5}}
\newcommand{\varf}{\ensuremath{\v_6}}
\newcommand{\varg}{\ensuremath{\v_7}}
\newcommand{\varh}{\ensuremath{\v_8}}
\newcommand{\vari}{\ensuremath{\v_9}}
\newcommand{\acta}{\ensuremath{\act_1}}
\newcommand{\actb}{\ensuremath{\act_2}}
\newcommand{\actc}{\ensuremath{\act_3}}
\newcommand{\actd}{\ensuremath{\act_4}}
\newcommand{\acte}{\ensuremath{\act_5}}
\newcommand{\actf}{\ensuremath{\act_6}}
\newcommand{\actg}{\ensuremath{\act_7}}
\newcommand{\acth}{\ensuremath{\act_8}}
\newcommand{\acti}{\ensuremath{\act_9}}

\tikzset{dots/.style args={#1per #2}{line cap=round,dash pattern=on 0 off #2/#1}}
\providecommand{\moham}[1]{\fbox{{\bf \@Mohammad: }#1}}
\newcommand{\TDbound}{{\mbox{\textsc{Arb}}}}
\newcommand{\expbound}{{\mbox{\textsc{Exp}}}}
\newcommand{\sasdom}{\expbound}
\newcommand{\cardfun}{\ensuremath{\mathbb{C}}}
\newcommand{\AGNa}{AGN1}
\newcommand{\AGNb}{AGN2}
\newcommand{\reset}{{\ensuremath{reset}}}

\newcommand{\matching}{\ensuremath{\mathcal{M}}\xspace}
\newcommand{\BlossomAlg}{{\mbox{\textsc{Find\_Max\_Matching}}}}
\newcommand{\AugPathAlg}{{\mbox{\textsc{Aug\_Path\_Search}}}}
\newcommand{\BlossomOrAugPath}{{\mbox{\textsc{Compute\_Blossom}}}}

\renewcommand{\vertices}{\ensuremath{\mathcal{V}}\xspace}
\newcommand{\lparty}{\ensuremath{L}}
\newcommand{\rparty}{\ensuremath{R}}
\newcommand{\lperm}{\ensuremath{\sigma}}
\newcommand{\rperm}{\ensuremath{\omega}}
\renewcommand{\vertexgen}{\ensuremath{v}}
\newcommand{\lvertexgen}{\ensuremath{v}}
\newcommand{\lvertexa}{\lvertexgen_1}
\newcommand{\lvertexb}{\lvertexgen_2}
\newcommand{\lvertexc}{\lvertexgen_3}
\newcommand{\lvertexd}{\lvertexgen_4}
\newcommand{\lvertexe}{\lvertexgen_5}
\newcommand{\lvertexf}{\lvertexgen_6}
\newcommand{\rvertexgen}{\ensuremath{u}}
\newcommand{\rvertexa}{\rvertexgen_1}
\newcommand{\rvertexb}{\rvertexgen_2}
\newcommand{\rvertexc}{\rvertexgen_3}
\newcommand{\rvertexd}{\rvertexgen_4}
\newcommand{\rvertexe}{\rvertexgen_5}
\newcommand{\rvertexf}{\rvertexgen_6}
\newcommand{\lorder}{\ensuremath{\pi}}
\newcommand{\rorder}{\ensuremath{\sigma}}
\newcommand{\rank}{\textit{online-match}}
\newcommand{\neighb}[2]{\ensuremath{{N_{#1} ({#2})}}}
\newcommand{\shiftsto}{\textit{shifts-to}}
\newcommand{\zig}{\textit{zig}}
\newcommand{\zag}{\textit{zag}}
\newcommand{\lpartyperm}{\ensuremath{\lparty'}}

\newcommand{\isaname}[1]{\emph{#1}}

\newcommand{\nth}[2]{#1[#2]}
\newcommand*{\uniform}{\mathcal{U}}
\newcommand*{\rankingprob}{\textit{RANKING}}
\newcommand*{\perms}{\mathcal{S}}
\newcommand*{\bernoulli}{\mathbb{I}}
\newcommand{\card}[1]{|#1|}

\section{Introduction}
The Primal-dual paradigm for analysing algorithms is one of the most successful. Its history spans more than 70 years, with the Hungarian Method \cite{KuhnHungarian} being one of the first algorithms for solving weighted bipartite matchings to follow this paradigm.
Since then, the paradigm has been used to design algorithms for problems in combinatorial optimisation, including Edmonds' algorithm for weighted matching in general graphs~\cite{EdmondsWeightedMatching}, all the way to modern analyses of online matching algorithms~\cite{DevanurOnlineMatchingPrimalDual} and some of the fastest algorithms for solving flow problems~\cite{almostLinearTimeFlows}.
In addition to exact algorithms, the primal-dual approach is also cornerstone to the design for a majority of approximation algorithms for optimisation problems, like MaxSAT, set cover, Steiner trees, etc. (cf.\ Part II of Vazirani's seminal book on approximation algorithms, which is dedicated to the primal-dual method).

In this work we present a series of formal analyses of primal-dual matching algorithms.
Our analyses cover a range of primal-dual algorithms: from the HM, arguably the first primal-dual algorithm, to more probabilistic arguments used to analyse online algorithms, including the well-known Adwords algorithm~\cite{AdWords2007} and the RANKING algorithm~\cite{KVV90}.
Our longer term goal is to create a formal library of lemmas and reasoning principles that aid in the analysis of primal-dual algorithms.

\subparagraph*{Availability.} We build on an ongoing effort to build a library of combinatorial optimisation in Isabelle/HOL~\cite{IGL}.
We plan to integrate our work in that library.

\subparagraph*{Theory Background.}
We assume the reader to be familiar with basic graph theory, such as vertices \vertices, edges \graph and paths in graphs, bipartiteness, and adjacency and incidence matrices encoding graphs.
Because it eases the formalisation, we identify a graph with its edges and define $\vertices = \bigcup \graph$. 
A \textit{matching} \matching in a graph \graph is a vertex-disjoint subset of the edges in \graph.
Matchings covering all vertices are \textit{perfect}.
We search for matchings that are optimum w.r.t.\ an optimisation objective e.g.\ the maximisation/minimisation of accumulated real weights.

In a primal-dual (PD) method, we maintain an upper bound for the value of the solution, and finally obtain a solution whose value equals the final upper bound (or is close to it), which implies optimality (or sufficient proximity).
We encode the optimisation problem into linear (in-)equalities alongside a linear optimisation objective (\textit{linear program/LP}), e.g. $\max\lbrace c^Tx.\, Ax \leq b, x \geq 0 \rbrace$ in linear algebra notation.
$A$ is usually the incidence matrix, a primal solution is a vector $x_{\matching} \in \lbrace0,1 \rbrace^{|\graph|}$ encoding a matching, and a dual solution $\pi$ is a $|\vertices|$-dimensional vector/function $\pi:\vertices \rightarrow \mathbb{R}$ (``potential''). 
We write accumulated weights as $w(E)$ for $E \subseteq \graph$, and $\pi(\vertices)$ for the potentials $\pi$.

For $\max\lbrace c^Tx.\, Ax \leq b, x \geq 0 \rbrace$, the \textit{primal}, there is a \textit{dual} $\min \lbrace b^Ty.\, A^Ty \geq c, y\geq 0 \rbrace$.
Vectors $x$ and $y$ satisfying these constraints are \textit{feasible primal} and \textit{dual solutions}, for which $c^Tx \leq b^Ty$ holds (\textit{weak duality WD}).
Furthermore, if $((1-\delta)A\hat{x}-b)^T\hat{y} = 0$ and $(1-\delta)(A^T\hat{y}-c)^T\hat{x} = 0$, then $(1-\delta)c^T\hat{x} = b^T\hat{y}$ and hence both $(1-\delta)\hat{x}$ and $\hat{y}$ are optimum solutions to the respective LPs (\textit{complementary slackness CS}).
If $\delta = 0$, $\hat{x}$ is an exact optimum, otherwise $\hat{x}$ satisfies an approximation guarantee.
For further details, e.g. other LP types and their versions of WD and CS, we refer to the literature~\cite{SchrijverLinprog,KorteVygenOptimisation}.

A PD method starts with a feasible dual solution and a primal candidate solution which together satisfy CS~\cite{PDintroduction} (scaled by $1-\delta$ for approximation).
It iteratively changes both solutions to bring the primal closer to feasibility while maintaining CS.
These steps are \textit{primal-dual adjustments (PDA)}.
When the primal solution becomes feasible, it is also optimum.

\subparagraph*{Formalisation.}
Our work here focuses on presenting aspects related to formal reasoning about primal-dual analyses.
The main design choice regarding that is the representation of LPs: we reuse matrices~\cite{JordanNormalFormAFP} that others used for verifying LP theory, e.g.\ strong duality~\cite{LinprogIsabelle} and the simplex algorithm~\cite{SimplexIsabelle}.
The mathematical core of the arguments, namely, CS and WD, is purely matrix-based and then connected to graphs by translating lemmas, which relate matchings and potentials on graphs to feasible LP solutions over matrices and vectors.
The rest of the paper focuses on the reasoning of primal-dual analyses.

Other design choices peripheral to primal-dual reasoning include our definition of matching.
We reuse matching-related Isabelle/HOL formalisations~\cite{TrustworthyGraphAlgos,BlossomAlgoIsabelle,RankingIsabelle},
which already contain many of the graph-theoretic concepts introduced above.
Another decision is the approach used to model and verify algorithms, where we reuse an approach by other authors~\cite{GraphAlgosFDS,FlowsIsabelleITP,MincostFlowLMCSArxiv,MatroidsGreedoidsIsabelle,BlossomAlgoIsabelle}.
In summary, we formalise deterministic algorithms as functional programs, i.e.\ loops as recursive functions and program states (collections of the variables a program uses) as records.
We prove \textit{invariants} (properties of the state) for the initial state, and that they are preserved by the loop iterations, which shows them for the final state.
For probabilistic algorithms, we use the existing Isabelle/HOL formalisation of the Giry monad~\cite{giryMonadIsabelle} to model and reason about them.
Algorithms have subprocedures that we assume and specify with Isabelle/HOL's locales~\cite{LocalesIsabelle}.
These are contexts that fix mathematical entities (``locale constants'') and assume their properties.
Within the context, these are available for definitions and proofs and the constants can be instantiated later.
This is a \textit{stepwise refinement}~\cite{WirthRefinement} where a program instruction is decomposed into more detailed instructions.

\section{Naive Maximum Weight Bipartite Matching\label{sec:naive}\vspace*{-0.2\baselineskip}}
\newcommand{\vmul}{\ensuremath{\texttt{*}_v}\xspace}
\newcommand{\mtrans}{\ensuremath{^T}\xspace}
\newcommand{\smul}{\ensuremath{\bullet}\xspace}
\newcommand{\nr}{\texttt{nr}\xspace}
\newcommand{\nc}{\texttt{nc}\xspace}
\newcommand{\zerov}{\texttt{0}$_v$\xspace}
\newcommand{\onev}{\texttt{1}$_v$\xspace}
\newcommand{\matchingfeasible}{\texttt{matching-feasible}\xspace}
\newcommand{\maxmatchingweakduality}{\texttt{max-matching-weak-duality}\xspace}
\newcommand{\weakdualitytheoremnonnegprimal}{\texttt{weak-duality-theorem-nonneg-primal}\xspace}
\newcommand{\maxmatchingpdoptimality}{\texttt{max-matching-pd-optimality}\xspace}
\newcommand{\zeroslackiftightmatching}{\texttt{zero-slack-if-tight-matching}\xspace}
\newcommand{\maxweightiftightmatchingcoversbads}{\texttt{max-weight-if-tight-matching-covers-bads}\xspace}
\newcommand{\graphinvar}{\texttt{graph-invar}\xspace}
\newcommand{\tttG}{\texttt{G}\xspace}
\newcommand{\uv}{\ensuremath{\lbrace u, v \rbrace}}
\newcommand{\naivematching}{\textsc{NaiveMaxWeightMatching}\xspace}
\newcommand{\nonzeros}{\textit{nonzeros}\xspace}
\newcommand{\findmatchingorwitness}{\texttt{find-matching-or-witness}\xspace}
\newcommand{\match}{\texttt{match}\xspace}
\newcommand{\witness}{\texttt{witness}\xspace}
\newcommand{\ttE}{\texttt{E}\xspace}
\newcommand{\ttX}{\texttt{X}\xspace}
\newcommand{\ttC}{\texttt{C}\xspace}
\newcommand{\ttN}{\texttt{N}\xspace}
\newcommand{\ttG}{\texttt{G}\xspace}
\newcommand{\ttw}{\texttt{w}\xspace}
\newcommand{\ttM}{\texttt{M}\xspace}
\newcommand{\ttf}{\texttt{f}\xspace}
\newcommand{\fdom}{\texttt{f-dom}\xspace}
\newcommand{\ttx}{\texttt{x}\xspace}
\newcommand{\naiveprimaldual}{\texttt{naive-primal-dual}\xspace}
\newcommand{\bads}{\texttt{bads}\xspace}
\newcommand{\findepsilon}{\texttt{find-}\ensuremath{\epsilon}\xspace}
\newcommand{\vsetiteratepmap}{\texttt{vset-iterate-pmap}\xspace}
\newcommand{\fold}{\texttt{fold}\xspace}
\newcommand{\naiveprimaldualonestep}{\texttt{naive-primal-dual-one-step}\xspace}
\newcommand{\naiveprimaldualpartialcorrectness}{\allowbreak\texttt{naive-primal-dual-partial-correctness}\xspace}
\newcommand{\initpotential}{\texttt{init-potential}\xspace}
\newcommand{\naiveprimaldualtotalcorrectness}{\allowbreak\texttt{naive-primal-dual-total-correctness}\xspace}
\newcommand{\epsilonmultiples}{\texttt{multiples-of}\xspace}
\newcommand{\feasiblemaxdual}{\texttt{feasible-max-dual}\xspace}

We search for matchings $\matching \subseteq \graph$ maximising $w(\matching)$ and derive a CS-based optimality criterion.
(Maximum) integral solutions to $\max \lbrace w^Tx_{\matching}. \, Ax_{\matching} \leq 1, x_{\matching}  \geq 0\rbrace$ encode a (max-weight) matching for $w: \graph \rightarrow \mathbb{R}^+_0$ if $A$ is the incidence matrix.
$Ax_{\matching} \leq 1$ and $x_{\matching} \geq 0$ assert that \matching is a matching.
The dual is $\min \lbrace 1^T\pi. \, A^T\pi \geq w, \pi \geq 0\rbrace$, i.e. find $\pi$ with minimum $\pi(\vertices)$ s.t. $\forall \uv \in \graph.\; \pi(u) + \pi (v) \geq w(\uv)$ and $\forall v \in \vertices.\; \pi (v) \geq 0$.
Such a $\pi$ is a \textit{feasible potential}, and $w_{\pi}$, defined as $w_{\pi}(\uv) = \pi (u) + \pi (v) - w(\uv)$ for $\uv \in \graph$, is the \textit{slack}.
For a potential $\pi$, $v$ with $\pi (v) \not = 0$ is a \textit{non-zero vertex}, and $e$ with $w_{\pi} (e) = 0$ is \textit{tight}.
Tight edges form the \textit{tight subgraph} $\graph_{\pi}$.
$\Gamma_{\pi}(X)$ are those vertices $v\in\vertices\setminus X$ where there is $\uv \in \graph_{\pi}$ with $u \in X$. 
$\Delta_{\pi}(X)$ are the tight edges connecting $X$ and $\Gamma_{\pi}(X)$.
($\Delta$ and $\Gamma$ refer to the analogous notions without considering slacks.)
We use this machinery to prove a sufficient condition for the weight-maximality of a matching:

\begin{lem}[Weight-Maximality\label{lem:maxwbpmatchcrit}]
Assume (1) \matching is a matching in a graph \graph, (2) $\pi$ is a feasible vertex potential, (3) all edges in \matching are tight, i.e.\ $\matching \subseteq \graph_{\pi}$, and (4) all vertices for which $\pi(v) \not = 0$ are matched. Then, \matching is a max-weight matching for \graph.
\end{lem}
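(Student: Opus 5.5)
The plan is to combine weak duality with complementary slackness (with $\delta = 0$) for the LP pair $\max \lbrace w^Tx. \, Ax \leq 1, x \geq 0\rbrace$ and $\min \lbrace 1^T\pi. \, A^T\pi \geq w, \pi \geq 0\rbrace$, where $A$ is the incidence matrix of \graph. We aim to show that $w(\matching) = \pi(\vertices)$, and that every matching $\matching'$ of \graph satisfies $w(\matching') \leq \pi(\vertices)$.

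\textbf{Step 1 (translation to LP vectors).} For any matching $\matching'$, its indicator vector $x_{\matching'} \in \lbrace 0,1\rbrace^{|\graph|}$ is a feasible primal solution: each row of $Ax_{\matching'}$ counts the matching edges incident to a vertex, and this count is at most $1$. By assumption (2), $\pi$ viewed as a vector is a feasible dual solution. Moreover $w^Tx_{\matching'} = w(\matching')$ and $1^T\pi = \pi(\vertices)$. These are the graph-to-matrix translation lemmas mentioned in the formalisation discussion.

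\textbf{Step 2 (weak duality).} Weak duality gives $w(\matching') \leq \pi(\vertices)$ for every matching $\matching'$. Directly, this is the chain
$w(\matching') \leq \sum_{\uv\in\matching'} (\pi(u)+\pi(v)) \leq \pi(\vertices)$.
The first inequality is dual feasibility. The second holds because the matched endpoints are distinct and $\pi \geq 0$.

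\textbf{Step 3 (tightness for \matching).} We show that both complementary slackness conditions hold for $x_\matching$ and $\pi$.

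The primal-side condition is $(Ax_\matching - 1)^T\pi = 0$. Every vertex with $\pi(v) \neq 0$ is matched by (4), so its row of $Ax_\matching$ equals exactly $1$. Every other term in the sum has $\pi(v) = 0$.

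The dual-side condition is $(A^T\pi - w)^T x_\matching = 0$. Every edge with $x_\matching(e) \neq 0$ lies in \matching, hence is tight by (3), i.e.\ its slack $w_\pi(e)$ vanishes.

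CS then yields $w(\matching) = \pi(\vertices)$. Concretely, $w(\matching) = \sum_{\uv\in\matching}(\pi(u)+\pi(v))$ by tightness, and this sum equals $\pi(\vertices)$ because \matching is vertex-disjoint and all non-zero vertices are covered. Combining with Step 2, $w(\matching') \leq w(\matching)$ for every matching $\matching'$, so \matching is max-weight.

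\textbf{Main obstacle.} The bookkeeping in Steps 1 and 3 is the hard part: identifying $\sum_{e\in\matching}\sum_{v\in e}\pi(v)$ with $\pi$ summed over $\vertices\matching$, and then with $\pi(\vertices)$. This requires disjointness of matching edges, that edges have exactly two endpoints, and finiteness of \graph. In matrix form it is the identity $(Ax_\matching)_v = 1$ for matched $v$, together with relating the matrix indexing of rows/columns to vertices/edges. I would first prove the translation lemma $A x_\matching = $ the indicator of $\vertices\matching$ for any matching, and derive everything else from it.
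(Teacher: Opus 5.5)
Your proposal is correct and takes the route the paper indicates. The paper gives no written proof, but it says it uses the LP machinery: translation lemmas turn the matching and potential into feasible primal and dual vectors, and it applies weak duality and complementary slackness (with $\delta = 0$) on the matrix side. You do the same: weak duality bounds every matching's weight by $\pi(\vertices)$, and conditions (3) and (4) discharge the two complementary slackness conditions, giving $w(\matching) = \pi(\vertices)$. Your view that the main bookkeeping is the translation lemma relating $Ax_{\matching}$ to the covered vertices also matches the paper's description of its formalisation.
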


\subparagraph*{The Algorithm.} 
Algorithm~\ref{algo:naive} works on \graph bipartite over $\lparty$ and $\rparty$ and $w:\graph \rightarrow \mathbb{R}^+_0$.
After initialising $\pi$ to a feasible potential, it tries to find a tight matching covering all non-zero vertices.
If there is such a matching, this is returned as an optimum solution.
Otherwise, it performs a \textit{primal-dual adjustment (PDA)} behind which the intuition is to move the primal (matching that matches as many non-zero vertices as possible) and the dual solution $\pi$ closer together.
If $\graph$ is bipartite and there is no matching in $\graph_{\pi}$ covering the non-zero vertices, we can find non-zero vertices $X$ with $|X| > |\Gamma_{\pi} (X)|$, as used for the PDA. 
The main invariant of this algorithm is the feasibility of $\pi$ (Invariant N1).
We have another invariant N2 saying that the vertex potentials $\pi(v)$ are integer multiples of the same real constant $\alpha$, ensuring termination for a specific class of edge weights.
We need two lemmas for correctness:

{
\begin{algorithm}[t]
\SetAlCapHSkip{0pt}
\SetAlgoHangIndent{0pt}
\SetInd{0.5em}{0.5em}
\SetVlineSkip{0.3mm}
\setlength{\algomargin}{10pt}
\caption{\naivematching(\graph, $\vertices = \lparty \cup \rparty$, $w : \graph \rightarrow \mathbb{R}_0^+$) \label{algo:naive}}
Initialise $\pi(v) = \max \lbrace w(e). \, e \in \Delta(X)\rbrace$ for $v \in \lparty$ and $\pi(v) = 0$ for $v \in \rparty$;\\
\While{$True$}
{
\textbf{if} $\exists \,\text{matching }\matching \subseteq \graph_{\pi}. \lbrace v.\, v \in \vertices \wedge \pi (v) > 0 \rbrace \subseteq \bigcup \matching$ \textbf{then return} such an \matching;\\
\Else{
find $X$ where $X\subseteq \lparty$ or $X \subseteq \rparty$ with $|X| > |\Gamma_{\pi} (X)|$ and compute \label{line:findX} \\
$\epsilon = \min (\lbrace w_{\pi}(\uv). \, u \in X \wedge v \not \in \Gamma_{\pi}(X) \wedge w_{\pi}(\uv) > 0 \rbrace \cup \lbrace \pi(v) . \, v \in X \rbrace)$;\label{line:computee}\\
\textbf{for} $x \in X$ \textbf{do} $\pi(v) \leftarrow \pi (v) - \epsilon$; \textbf{for} $x \in \Gamma_{\pi} (X)$ \textbf{do} $\pi(v) \leftarrow \pi (v) + \epsilon$;\label{line:enditeration}
}
}
\end{algorithm}
}

\begin{lem}\label{lemma:pdadjustment}
Let $\pi$ be feasible for $\graph$ and $w:\graph\rightarrow\mathbb{R}^+_0$, and $S$ be a set of vertices. 
Also, assume there is no $e \in \graph$ with $e \subseteq S$, and $\epsilon \geq 0$, $\epsilon \leq w_{\pi}(e)$ for all $e$ connecting $S$ and $\vertices \setminus S \setminus \Gamma_{\pi}(S)$, and $\forall v \in S. \;\epsilon \leq \pi(v)$. 
Let $\pi'(v) = \pi (v) - \epsilon$ for $v \in S$, $\pi'(v) = \pi (v) + \epsilon$ for $v \in \Gamma_{\pi}(S)$, and $\pi'(v) = \pi(v)$ otherwise.
$\pi'$ is feasible, and $\pi'(\vertices) = \pi(\vertices) + (|\Gamma_{\pi}(S)| - |S|)\cdot \epsilon$.
\end{lem}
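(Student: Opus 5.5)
We split the statement into its two claims: feasibility of $\pi'$, and the formula for $\pi'(\vertices)$. Throughout we use that $S$ and $\Gamma_{\pi}(S)$ are disjoint, which holds because $\Gamma_{\pi}(S) \subseteq \vertices \setminus S$ by definition. This makes $\pi'$ well defined.

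\textbf{Feasibility.} We check the two families of dual constraints separately.

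Nonnegativity comes first. For $v \in S$ we have $\pi'(v) = \pi(v) - \epsilon \geq 0$ by the assumption $\epsilon \leq \pi(v)$. For every other vertex, $\pi'(v) \geq \pi(v) \geq 0$ because $\epsilon \geq 0$.

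For the edge constraints, we take $e = \uv \in \graph$ and show $w_{\pi'}(e) \geq 0$ by cases on where the endpoints lie.
\begin{itemize}
\item Both endpoints in $S$: impossible, since no edge satisfies $e \subseteq S$.
\item Exactly one endpoint in $S$, say $u \in S$.
  \begin{itemize}
  \item If $v \in \Gamma_{\pi}(S)$, then $\pi'(u) + \pi'(v) = \pi(u) + \pi(v)$, so $w_{\pi'}(e) = w_{\pi}(e) \geq 0$.
  \item If $v \in \vertices \setminus S \setminus \Gamma_{\pi}(S)$, then $w_{\pi'}(e) = w_{\pi}(e) - \epsilon \geq 0$ by the hypothesis on edges leaving $S$ outside $\Gamma_{\pi}(S)$.
  \end{itemize}
\item No endpoint in $S$: each endpoint's potential is unchanged or increased by $\epsilon \geq 0$, so $w_{\pi'}(e) \geq w_{\pi}(e) \geq 0$.
\end{itemize}
These cases are exhaustive because $S$ and $\Gamma_{\pi}(S)$ are disjoint. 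The symmetric case $v \in S$ follows since edges are unordered sets.

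\textbf{Sum.} We partition $\vertices$ into $S$, $\Gamma_{\pi}(S)$ and the rest, and sum $\pi'$ over each part. This gives
\[ \pi'(\vertices) = \pi(S) - |S|\epsilon + \pi(\Gamma_{\pi}(S)) + |\Gamma_{\pi}(S)|\epsilon + \pi(\text{rest}) = \pi(\vertices) + (|\Gamma_{\pi}(S)| - |S|)\cdot\epsilon. \]
This step needs $\vertices$ finite and $S \subseteq \vertices$. I would take both as implicit, since $S$ is meant as a set of vertices of the finite graph; if $S$ is not assumed to lie in $\vertices$, one intersects with $\vertices$ first. $\Gamma_{\pi}(S) \subseteq \vertices$ holds by definition.

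\textbf{Main obstacle.} The mathematics is routine. The fiddly part, especially in Isabelle, is the case split on an unordered edge $\uv$: it needs the disjointness of $S$ and $\Gamma_{\pi}(S)$, the fact that an edge with one endpoint in $S$ and the other outside it is exactly an edge ``connecting $S$ and $\vertices\setminus S\setminus\Gamma_{\pi}(S)$'' in the hypothesis's sense, and the symmetry between $u$ and $v$. I would first prove an auxiliary statement for ordered pairs and then symmetrise it. The finite-sum splitting would use standard lemmas on sums over disjoint unions.
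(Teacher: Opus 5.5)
Your proposal is correct and follows the paper's proof: the edge constraints come from a case analysis on where the endpoints of $\{u,v\}$ lie relative to $S$ and $\Gamma_{\pi}(S)$. Nonnegativity holds because only vertices of $S$ lose potential, and the sum formula follows from the definition of $\pi'$. You also spell out three things the paper's sketch leaves implicit: that $S$ and $\Gamma_{\pi}(S)$ are disjoint, that $S \subseteq \mathcal{V}$, and that $\mathcal{V}$ is finite.
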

\begin{proof}[Proof Ideas]
Take $\uv \in \graph$, we know $\pi(u) + \pi(v) \geq w(\uv)$, and $\pi'(u) + \pi'(v) \geq w(\uv)$ follows by a case analysis.
Also $\pi'(v) \geq 0$ because $\pi'(v) \leq \pi(v)$ only if $v \in S$, and then, $\pi'(v) = \pi(v) - \epsilon \geq 0$.
The statement on  $\pi'(\vertices)$ follows from the definition of $\pi'$.
\end{proof}
\begin{lem}\label{lemma:naivecorr}
If Algorithm~\ref{algo:naive} terminates on \graph bipartite over $\lparty$ and $\rparty$ with $w:\graph \rightarrow \mathbb{R}^+_0$, it returns a max-weight matching.
If all $w(e)$ are integer multiples of $\alpha \in \mathbb{R}$, it terminates.
\end{lem}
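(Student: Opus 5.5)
Partial correctness and termination are handled separately, both through loop invariants preserved by each iteration.

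\emph{Partial correctness.} The central invariant is N1: $\pi$ is a feasible potential. At initialisation, each $v\in\lparty$ gets the maximum weight of its incident edges and $\rparty$ gets $0$. So $\pi\geq 0$, and for every $\uv\in\graph$ with $u\in\lparty$ we have $\pi(u)+\pi(v)\geq w(\uv)$. Preservation follows from Lemma~\ref{lemma:pdadjustment} with $S=X$, after checking its hypotheses:
\begin{itemize}[nosep]
\item no edge lies inside $X$, since $X\subseteq\lparty$ or $X\subseteq\rparty$ and \graph is bipartite;
\item $\epsilon\geq 0$, since it is a minimum of positive slacks and of values $\pi(v)\geq 0$;
\item $\epsilon\leq w_\pi(e)$ for edges from $X$ to $\vertices\setminus X\setminus\Gamma_\pi(X)$, since such edges are not tight, so their slack is positive and appears in the minimum;
\item $\epsilon\leq\pi(v)$ for $v\in X$, by the second set in the minimum.
\end{itemize}
When the loop returns \matching, it is a matching in $\graph_\pi\subseteq\graph$ covering all non-zero vertices. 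With N1, the hypotheses of Lemma~\ref{lem:maxwbpmatchcrit} hold, so \matching is max-weight.

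We also need that line~\ref{line:findX} is executable when no such matching exists. This is a Hall/K\H{o}nig-type deficiency argument on the bipartite tight subgraph. Suppose no matching in $\graph_\pi$ covers the non-zero vertices. Then some non-zero set $X$ on one side has $|X|>|\Gamma_\pi(X)|$. In the formalisation I would assume this as a locale-specified subprocedure, with this existence fact as its specification.

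\emph{Termination.} Add invariant N2: all $\pi(v)$ are integer multiples of $\alpha$, and $\pi(\vertices)\geq 0$. Initially N2 holds because each $\pi(v)$ is a weight or $0$. If all $\pi$ and all $w$ are multiples of $\alpha$, then every slack is too, so $\epsilon$ is a multiple of $\alpha$ and N2 is preserved.

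Next, $\epsilon>0$. All slacks in the first set are positive by construction, and every $v\in X$ has $\pi(v)>0$, so $\epsilon\geq\alpha$; we may assume $\alpha>0$, the case of all-zero weights being trivial. If the first set is empty, $\epsilon$ is a minimum of $\pi$ over $X$, and $X$ is nonempty because $|X|>|\Gamma_\pi(X)|\geq0$.

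By Lemma~\ref{lemma:pdadjustment}, $\pi(\vertices)$ drops by $(|X|-|\Gamma_\pi(X)|)\epsilon\geq\alpha$ per iteration. It is nonnegative by N1 and starts finite, so the number of iterations is at most $\pi_0(\vertices)/\alpha$. The measure for the recursion is therefore $\lfloor\pi(\vertices)/\alpha\rfloor$ as a natural number.

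The main obstacle I expect is termination, specifically showing $\epsilon\geq\alpha$ rigorously. This needs the nonemptiness of the min-set, the multiples-of-$\alpha$ closure under sums and differences, and the fact that a positive multiple of $\alpha$ is at least $\alpha$. Stating the deficient-set existence as a precondition of the subprocedure is the other delicate point.
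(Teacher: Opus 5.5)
Your proposal is correct and takes the route the paper indicates: invariant N1 is preserved via Lemma~\ref{lemma:pdadjustment} with $S=X$, Lemma~\ref{lem:maxwbpmatchcrit} is applied on return, and invariant N2 forces $\pi(\vertices)$ to drop by at least $\alpha$ per iteration, with the existence of a deficient non-zero set $X$ taken as given, as the paper also does. If you ever prove that fact rather than assume it, note that non-zero vertices occur on both sides, so Hall's theorem alone is not enough; you also need the Mendelsohn--Dulmage theorem to combine a matching covering the left non-zero vertices with one covering the right non-zero vertices.
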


Algorithm~\ref{algo:naive} is a typical primal dual algorithm: 
keep an upper bound (here $\pi(\vertices)$) for the value of any solution (here $w(\matching)$ of any matching \matching).
If a solution with value equal to the upper bound cannot be found yet, we lower the upper bound carefully and continue.
If $\matching$ with $w(\matching) = \pi(\vertices)$ (inferred from CS) is found, however, $\matching$ must have maximum weight.

\section{The Hungarian Method}
\newcommand{\missed}{\textit{missed}\xspace}
\newcommand{\hungarian}{\textsc{HungarianMethod}\xspace}
\newcommand{\pathsearch}{\textsc{PathSearch}\xspace}
\newcommand{\pathsearchpar}{\textsc{PathSearch}()\xspace}
\newcommand{\infeasible}{\textit{infeasible}\xspace}
\newcommand{\hmflag}{\textit{flag}\xspace}
\newcommand{\dualunbounded}{\textit{dual-unbounded}\xspace}
\newcommand{\Lmatched}{\textit{L-matched}\xspace}
\newcommand{\nextiteration}{\textit{next-iteration}\xspace}
\newcommand{\hungarianstate}{\texttt{hungarian-state}\xspace}
\newcommand{\potentialtype}{\texttt{'potential}\xspace}
\newcommand{\matchingtype}{\texttt{'matching}\xspace}
\newcommand{\ttpathsearch}{\texttt{path-search}\xspace}
\newcommand{\hungarianloop}{\texttt{hungarian-loop}\xspace}
\newcommand{\ttlpar}{\texttt{(}\xspace}
\newcommand{\ttrpar}{\texttt{)}\xspace}
\newcommand{\tthungarian}{\texttt{hungarian}\xspace}
\newcommand{\pathsearchprecond}{\texttt{path-search-precond}\xspace}
\newcommand{\goodsearchresult}{\texttt{good-search-result}\xspace}
\newcommand{\stateinvarpresonestep}{\texttt{state-invar-pres-one-step}\xspace}
\newcommand{\cardincrease}{\texttt{card-increase}\xspace}
\newcommand{\hungariancorrectness}{\texttt{hungarian-correctness}\xspace}
\newcommand{\hungarianlooptermination}{\texttt{hungarian-loop-termination}\xspace}
\newcommand{\stateinvar}{\texttt{state-invar}\xspace}
\newcommand{\buddies}{\texttt{buddies}\xspace}
\newcommand{\statehere}{\texttt{state}\xspace}
\newcommand{\cardhere}{\texttt{card}\xspace}
\newcommand{\F}{\ensuremath{\mathcal{F}}\xspace}
\newcommand{\evens}{\textit{evens}\xspace}
\newcommand{\odds}{\textit{odds}\xspace}
\newcommand{\frees}{\textit{frees}\xspace}
\newcommand{\roots}{\textit{roots}\xspace}
\newcommand{\getpath}{\textit{get-path}\xspace}
\newcommand{\ttevens}{\texttt{evens}\xspace}
\newcommand{\ttodds}{\texttt{odds}\xspace}
\newcommand{\ttroots}{\texttt{roots}\xspace}
\newcommand{\ttgetpath}{\texttt{get-path}\xspace}
\newcommand{\abstractforest}{\texttt{abstract-forest}\xspace}
\newcommand{\ben}{\textit{ben}\xspace}
\newcommand{\heap}{\textit{heap}\xspace}
\newcommand{\hinsert}{\textsf{insert}\xspace}
\newcommand{\hdecrease}{\textsf{decrease}\xspace}
\newcommand{\hdelmin}{\textsf{del-min}\xspace}
\newcommand{\rundef}{\textsf{undef}\xspace}
\newcommand{\hempty}{\texttt{empty}\xspace}
\newcommand{\extractmin}{\texttt{extract-min}\xspace}
\newcommand{\decrkey}{\texttt{decr-key}\xspace}
\newcommand{\thinsert}{\texttt{insert}\xspace}
\newcommand{\thabstract}{\texttt{abstract}\xspace}
\newcommand{\invar}{\texttt{invar}\xspace}
\newcommand{\edgecosts}{\texttt{edge-costs}\xspace}
\newcommand{\vsetiterateben}{\texttt{vset-iterate-ben}\xspace}
\newcommand{\searchpath}{\texttt{search-path}\xspace}
\newcommand{\initialstate}{\texttt{initial-state}\xspace}
\newcommand{\searchpathloop}{\texttt{search-path-loop}\xspace}
\newcommand{\primaldualpathsearch}{\texttt{primal-dual-path-search}\xspace}

Even with termination, Algorithm~\ref{algo:naive} can have an exponential running time.
However, one can combine PDAs with the search for augmenting paths (augpaths).
For a matching \matching, these are paths $p \subseteq \graph$ such that the symmetric difference $\matching\oplus p$ is a matching $\matching'$ with $|\matching'| > |\matching|$.
This guarantees termination in polynomial time.
Although the principle of PD algorithms is simple, polynomial-time implementations and the verification thereof can be surprisingly hard.
Because optimality criterion and path search are simpler for min-weight perfect matchings, we consider this problem instead.

These matchings are integral solutions to $\min \lbrace w^Tx_{\matching}. \, Ax_{\matching} = 1, x_{\matching}  \geq 0\rbrace$,
whose dual is $\max \lbrace 1^T\pi. \, A^T\pi \leq w\rbrace$.
Feasibility of the potential (``mp-feasibility'') now means that $\pi(u) + \pi(v) \leq w(\uv)$ for all $\uv \in \graph$ \textit{without} $\forall \, v\in\vertices.\;\pi(v) \geq 0$. 
WD is here that $1^T\pi \leq w^Tx_{\matching}$ for an mp-feasible $\pi$ and perfect matching \matching.
CS is simplified to feasible $\hat{x}$ and $\hat{\pi}$ being optimum solutions for the respective LPs iff $(A^T\hat{\pi} - w)^T\hat{x} = 0$.
Similarly to Lemma~\ref{lem:minperfbpmatchcrit}, we formalised this optimality criterion:

\begin{lem}\label{lem:minperfbpmatchcrit}
If (1) \matching is a perfect matching in \graph, (2) $\pi$ is an mp-feasible vertex potential and (3) edges in \matching are tight w.r.t.\ $\pi$, then \matching is a min-weight perfect matching for \graph.
\end{lem}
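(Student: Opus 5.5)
The plan is to compare the weight of \matching with that of an arbitrary perfect matching $\matching'$ of \graph via weak duality, using $\pi(\vertices)$ as the common value. I will establish two facts: $w(\matching) = \pi(\vertices)$, and $\pi(\vertices) \leq w(\matching')$ for every perfect matching $\matching'$. Together they give $w(\matching) \leq w(\matching')$, which is exactly min-weight perfectness.

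For the first fact I will use tightness. Every edge $\uv \in \matching$ satisfies $w(\uv) = \pi(u) + \pi(v)$, so
\[ w(\matching) = \sum_{\uv \in \matching} \bigl(\pi(u) + \pi(v)\bigr). \]
Since \matching is a matching, its edges are pairwise vertex-disjoint. Since it is perfect, $\bigcup \matching = \vertices$. Hence the double sum re-indexes to $\sum_{v \in \vertices} \pi(v) = \pi(\vertices)$.

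For the second fact I will run the same re-indexing on $\matching'$, but use mp-feasibility in place of tightness. This gives
\[ \pi(\vertices) = \sum_{\uv \in \matching'} \bigl(\pi(u)+\pi(v)\bigr) \leq \sum_{e \in \matching'} w(e) = w(\matching'). \]
Equivalently, in the paper's matrix framework, this is weak duality $1^T\pi \leq w^T x_{\matching'}$, where $x_{\matching'}$ is feasible for $Ax = 1$, $x \geq 0$ and $A$ is the incidence matrix. The equality $w(\matching) = \pi(\vertices)$ is then the CS condition $(A^T\pi - w)^T x_{\matching} = 0$.

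The main obstacle is the re-indexing step
\[ \sum_{\uv \in \matching} \bigl(\pi(u)+\pi(v)\bigr) = \pi(\vertices), \]
which needs finiteness of \graph and the fact that each edge has exactly two distinct endpoints (a graph invariant). It is cleanest to prove it once as a general lemma, namely that $\sum_{e \in \matching} \sum_{v \in e} f(v) = \sum_{v \in \bigcup\matching} f(v)$ for any matching, by splitting the sum over the disjoint family $\matching$. Both facts then follow from this lemma. Alternatively, if the translation lemmas relating $x_{\matching}$ to $A$ are available, one can instead show $A x_{\matching} = 1$ and apply the matrix-level WD and CS lemmas directly.
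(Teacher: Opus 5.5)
Your proposal is correct and is the same weak-duality-plus-complementary-slackness argument the paper relies on: $\pi(\vertices) \leq w(\matching')$ for every perfect matching $\matching'$ by mp-feasibility, with equality for $\matching$ by tightness. The only difference is packaging. The paper obtains both facts from its matrix-level WD and CS lemmas via translation lemmas (essentially $A x_{\matching} = 1$), which is the alternative you mention, whereas your re-indexing lemma does the same counting directly on the graph.
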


There is verified executable code for max-weight bipartite matching because we reduce this and 4 other problems to min-weight perfect matching.
We extend the graph \graph to a complete bipartite graph $\graph'$ with sides of equal size ($|\lparty'| =|\rparty'|$ and $\graph' = \lbrace \uv. \; u \in \lparty' \wedge v \in \rparty' \rbrace$), ensuring the existence of a perfect matching.
If edges in $\graph' \setminus \graph$ should be avoided, we impose a penalty weight on them.
From a min-weight perfect matching for the extended weights in the new graph, we select edges forming an optimum solution for the original problem.

\begin{algorithm}[t!]
\SetAlCapHSkip{0pt}
\SetAlgoHangIndent{0pt}
\SetInd{0.5em}{0.5em}
\SetVlineSkip{0.3mm}
\setlength{\algomargin}{10pt}
\SetKwFor{While}{while}{}{}
\caption{\hungarian(\graph, $\vertices = \lparty \cup \rparty$, $w : \graph \rightarrow \mathbb{R}, \textit{initial mp-feasible } \pi$) \label{algo:hungarian}}
\textbf{if} $|\lparty| \not = |\rparty|$ \textbf{return} $\infeasible$; \textbf{else} initialise $\matching = \emptyset$;\\
\While{$True$ {\upshape \textbf{do} $(\hmflag, \pi', p) \leftarrow \pathsearch(\graph, \lparty, \rparty, w, \matching,\pi)$;}}
{
\textbf{if} $\hmflag = \dualunbounded$ \textbf{then return} $\infeasible$\\
\textbf{else if} $\hmflag = \Lmatched$ \textbf{then return} $\matching$
\textbf{else} [$\matching \leftarrow \matching \oplus p$; $\pi \leftarrow \pi'$;]
}
\end{algorithm}

\subparagraph*{Top Loop.}
The Hungarian Method (HM)~\cite{Jacobi,KuhnHungarian,KuhnHungarianVariants} (Algorithm~\ref{algo:hungarian}) expects a bipartite graph given by $\graph$, $\lparty$ and $\rparty$, a weight function $w$ and an initially feasible potential $\pi$.
After checking for obvious infeasibility if $|\lparty|\not = |\rparty|$ and initialising \matching as $\emptyset$, the main loop repeatedly applies \pathsearchpar which returns a status flag.
If \hmflag says that the dual LP is unbounded, infeasibility follows.
If $\lparty$ is matched, the current matching is returned as min-weight perfect matching for \graph.
Otherwise, \pathsearch also returns a new feasible potential $\pi'$ and an \matching -augpath $p$.
$p$ is used to augment \matching and the algorithm continues with the next iteration.

Algorithm~\ref{algo:hungarian} maintains as invariants that (HM1) \matching is a matching in \graph, (HM2) $\matching \subseteq \graph_{\pi}$, and (HM3) $\pi$ is mp-feasible w.r.t.\ $w$ and \graph. 
Provided that \matching is a matching in $\graph_{\pi}$ and $\pi$ is mp-feasible, we assume three properties for $(\hmflag, \pi', p) = \pathsearch(\graph, \lparty, \rparty, w, \matching,\pi)$:
(P1) If $\hmflag = \dualunbounded$ there is an mp-feasible $\pi'$ with $\pi'(\vertices) > B$ for any $B \in \mathbb{R}$. 
(P2) If $\hmflag = \Lmatched$, $\lparty \subseteq \bigcup \matching$.
(P3) If $\hmflag = \nextiteration$, $\pi'$ is mp-feasible, $\matching \subseteq \graph_{\pi'}$, $p \subseteq \graph_{\pi'}$ and $p$ is an $\matching$-augpath.
Assuming P1-P3, the following correctness lemma holds:

\begin{lem}
Let \graph be bipartite over $\lparty$ and $\rparty$, and $\pi$ be mp-feasible w.r.t.\ \graph and $w:\graph\rightarrow\mathbb{R}$.
On this input, the HM terminates, and returns a min-weight perfect matching if there is one.
\end{lem}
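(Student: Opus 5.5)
The proof will be by invariants and a decreasing measure over the loop of Algorithm~\ref{algo:hungarian}, using only P1--P3 for \pathsearch. The state consists of \matching and $\pi$. I take as loop invariant the conjunction of HM1--HM3: \matching is a matching in \graph, $\matching \subseteq \graph_{\pi}$, and $\pi$ is mp-feasible w.r.t.\ $w$ and \graph.

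\emph{Initialisation.} If $|\lparty| \not= |\rparty|$, no perfect matching exists in the bipartite graph, since every edge has exactly one endpoint in $\lparty$. So returning $\infeasible$ is correct. Otherwise $\matching = \emptyset$, and HM1 and HM2 hold trivially, while HM3 is the assumption on the input $\pi$.

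\emph{Preservation.} In the \nextiteration case, P3 gives that $\pi'$ is mp-feasible, $\matching \subseteq \graph_{\pi'}$, $p \subseteq \graph_{\pi'}$, and $p$ is an \matching-augpath. Hence $\matching \oplus p$ is a matching and lies in $\matching \cup p \subseteq \graph_{\pi'} \subseteq \graph$. This re-establishes HM1--HM3 for the new state $(\matching \oplus p, \pi')$.

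\emph{Termination.} By definition of augpath, $|\matching \oplus p| > |\matching|$. Every matching in the finite graph \graph has size at most $|\vertices|/2$. So $|\vertices| - |\matching|$ is a natural number that strictly decreases in each \nextiteration step, and only \nextiteration steps continue the loop. In the formalisation this becomes a termination measure for the recursive loop function, with the invariant supplying the finiteness and matching facts needed to bound it.

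\emph{Correctness of the exit cases.} In the \Lmatched case, P2 gives $\lparty \subseteq \bigcup \matching$. Since $|\lparty| = |\rparty|$ and every edge of \matching joins one vertex of $\lparty$ to one of $\rparty$, we get $|\matching| = |\lparty| = |\rparty|$, so $\rparty$ is covered too and \matching is perfect. Together with HM2 and HM3, Lemma~\ref{lem:minperfbpmatchcrit} then shows that \matching is a min-weight perfect matching.

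In the \dualunbounded case, I argue by contradiction that no perfect matching exists. Suppose $\matching^*$ were a perfect matching and set $B = w(\matching^*)$. P1 yields an mp-feasible $\pi'$ with $\pi'(\vertices) > B$. But weak duality for the min-weight perfect matching LP gives $\pi'(\vertices) \leq w(\matching^*)$, a contradiction. So returning $\infeasible$ is correct.

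I expect the main obstacle to be the \Lmatched case. Deducing perfectness requires the counting argument that a bipartite matching covering $\lparty$ with $|\lparty| = |\rparty|$ also covers $\rparty$, and in Isabelle this needs an explicit bijection or cardinality argument between matched edges and their endpoints on each side. Invoking weak duality in graph terms also needs the translation lemmas connecting potentials and perfect matchings to feasible LP solutions.
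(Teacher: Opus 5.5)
Your proof is correct and takes essentially the same route the paper sets up, although the paper leaves the argument implicit. You keep HM1--HM3 as loop invariants preserved via P3 and get termination from the strict cardinality increase along augmenting paths. The \emph{L-matched} exit is handled by P2, the $|\lparty|=|\rparty|$ counting argument and Lemma~\ref{lem:minperfbpmatchcrit}, and the \emph{dual-unbounded} exit by P1 together with weak duality.
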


\subparagraph*{Path Search.}
We search for augpaths $p \subseteq \graph_{\pi}$ that are tight w.r.t.\ an mp-feasible $\pi$, for which \textit{alternating forests (AF)} are the central tool~\cite{BlossomAlgo}.
Other authors formalised a non-executable version for the blossom algorithm~\cite{TrustworthyGraphAlgos,BlossomAlgoIsabelle},
and we provide an executable one that works well with forest growth and PDAs happening at the same time.
An $\epsilon$ is determined similar to Algorithm~\ref{algo:naive} and preservation of mp-feasibility is similar to Lemma~\ref{lemma:pdadjustment}.
Efficient search requires ``caching'' of slacks and a priority queue.
We used \textit{priority search trees}~\cite{LammichNipkowDijkstraPrim} for the queue and red-black trees for other data structures, resulting in a verified $\mathcal{O}(n\cdot(n+m) \cdot \log n)$ implementation of the HM, which is the best possible running time of a purely functional implementation.
The verification effort for \pathsearch was high with 8 involved invariants.

\section{Online Matching}
\newcommand{\expect}{\ensuremath{\mathbb{E}}}
Online bipartite matching is a variant of bipartite matching where vertices in one party of the graph arrive online, one-by-one, each along with its incident edges.
After a vertex' arrival, the algorithm irrevocably decides on whether any of its incident edges is included in the matching.
This setting was introduced by Karp et al.~\cite{KVV90}, where they considered unweighted bipartite matching.
The online model of matching has recently gained substantial interest~\cite{rankingHighProbability,onlinebmatching,fullyOnlineMatchingI,fullyOnlineMatchingII} due to the proliferation of the internet and different applications which could be modelled and understood as online matching problems.
Most notable among those is the Adwords~\cite{AdWords2007} algorithm, which is an idealised model of how advertisers bid on keywords to show their ads to users who search for those keywords.

RANKING~\cite{KVV90}, Adwords~\cite{AdWords2007}, and most other online matching algorithms have proofs that are quite complex to understand, let alone formalise.
Often, this is primarily due to the complex combinatorial arguments used to show them correct.
For instance, the correctness of RANKING's initial analysis by Karp et al., which was formalised earlier by Abdulaziz and Madlener~\cite{RankingIsabelle}, was improved over 6 times by a number of authors~\cite{goelOnlineBudgetedMatching2008,onlineMatchingSimple2008,DevanurOnlineMatchingPrimalDual,onlineMatchingEcon,vaziraniOnlineMatching2022,rankingHighProbability}, similarly to improvements to the analysis~\cite{DevanurOnlineMatchingPrimalDual,vaziraniOnlineMatchingArxiv} of Adwords.
In this work, as part of our framework, we formalise a PD-analysis that has substantially simplified the competitiveness analysis of a number of online matching algorithms.
We briefly describe those formalisations.

\begin{algorithm}[t!]
\SetAlCapHSkip{0pt}
\SetAlgoHangIndent{0pt}
\SetInd{0.5em}{0.5em}
\SetVlineSkip{0.3mm}
\setlength{\algomargin}{10pt}
\SetKwFor{While}{while}{}{}
\caption{$\rankingprob(\graph \text{ bipartite over } L \text{ and } R, \text{arrival order } \rperm \text{ for } \rparty)$ \label{algo:ranking}}
$\lperm \gets$ a random permutation of $\lparty$; $\matching\gets\emptyset$;\\
\For{every arriving vertex $\rvertexgen$ in $\rperm$}{
\lIf{$\exists\lvertexgen\in \gamma(\rvertexgen) \setminus \bigcup\matching$}{$\matching\gets \matching \cup \{\{\argmin_{\lvertexgen\in \gamma(\rvertexgen) \setminus \bigcup\matching} \lperm(\lvertexgen),\rvertexgen\}\}$}
}
\Return $\matching$;
\end{algorithm}

We first discuss the RANKING algorithm briefly, following the previous formalisation~\cite{RankingIsabelle}.
In its original form, RANKING operates as sketched in Algorithm~\ref{algo:ranking}:
it takes as an input a bipartite graph with edges $\graph$, and a list $\rperm$ with the order of arrival of the online vertices $\rparty$.
The offline party $\lparty$ is permuted uniformly randomly, leading to a list $\lperm$.
We then go through the online vertices in the order given by $\rperm$, adding to the matching an edge that, if it exists, connects the incoming vertex $u$ to the unmatched offline vertex $v$ with the highest ranking w.r.t.\ $\lperm$. The neighbourhood, i.e. all vertices a vertex $u$ is connected to, is $\gamma(u)$.
The main theorem to prove about RANKING pertains to its competitive ratio:
\begin{theorem}
The competitive ratio of RANKING for an instance with a maximum cardinality matching of size $n$ is at least $1 - \frac{1}{e}$, i.e. 
$1 - \frac{1}{e} \leq \frac{\expect_{R \sim \text{RANKING}(G,\pi)}[R]}{n}$.
\end{theorem}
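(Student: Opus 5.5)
We will prove the bound with the randomized primal-dual method of Devanur, Jain and Kleinberg. The idea is to construct, from each run of RANKING, a dual assignment whose total equals the size of the returned matching. In expectation this assignment is approximately feasible, scaled by $1-\frac{1}{e}$, for the dual of the maximum matching LP $\min\lbrace 1^T\pi.\, A^T\pi \geq 1, \pi \geq 0\rbrace$. Weak duality, in the matrix form already used for Lemma~\ref{lem:maxwbpmatchcrit}, then bounds $n$ from above by the expected dual value divided by $1-\frac{1}{e}$.

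\textbf{Reformulation.} Rather than a uniform permutation $\lperm$ of $\lparty$, we draw an independent uniform $Y_v \in [0,1]$ for each $v \in \lparty$ and rank offline vertices by increasing $Y_v$. The resulting order is a uniform permutation, so the expected matching size is unchanged. This coupling is the first thing to establish in the Giry monad: the distribution of the argsort of i.i.d.\ continuous uniforms is uniform over permutations. A discrete variant is also available, with ranks $t/|\lparty|$ and a Riemann-sum bound.

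\textbf{Charging.} Let $g(y) = e^{y-1}$. When online vertex $u$ is matched to $v$, set $\alpha_v = g(Y_v)/(1-\frac{1}{e})$ and $\beta_u = (1-g(Y_v))/(1-\frac{1}{e})$. All other duals are $0$. Summing over matched edges gives $\sum \alpha + \sum \beta = |\matching|/(1-\frac{1}{e})$ pointwise.

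\textbf{Key lemma.} For every edge $\lbrace v,u\rbrace \in \graph$,
\[\expect[\alpha_v + \beta_u] \geq 1.\]
Fix all $Y_{v'}$ with $v' \neq v$, and consider the run on $\graph$ with $v$ deleted. Let $y^*$ be the rank of the vertex matched to $u$ in that run, or $1$ if $u$ is unmatched there. We prove two structural facts by comparing the runs with and without $v$:
\begin{enumerate}
\item if $Y_v < y^*$, then $v$ is matched in the full run;
\item in the full run $u$ is always matched to a vertex of rank at most $y^*$, so $\beta_u \geq (1-g(y^*))/(1-\frac{1}{e})$.
\end{enumerate}
Integrating over $Y_v$ then gives
\[(1-\tfrac{1}{e})\,\expect[\alpha_v+\beta_u] \geq \int_0^{y^*} e^{y-1}\,dy + 1 - e^{y^*-1} = 1 - \tfrac{1}{e},\]
which is the claimed inequality.

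\textbf{Conclusion.} By linearity of expectation, $\expect[\alpha]$ and $\expect[\beta]$ form a feasible dual. Take a maximum matching $\matching^*$ of size $n$. Each edge of $\matching^*$ has $\expect[\alpha_v+\beta_u] \geq 1$, and the vertices of $\matching^*$ are distinct, so
\[n \leq \sum_{\lbrace v,u\rbrace \in \matching^*} \expect[\alpha_v+\beta_u] \leq \frac{\expect[|\matching|]}{1-\frac{1}{e}}.\]

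\textbf{Main obstacle.} The hardest step is the structural comparison between RANKING on $\graph$ and on $\graph$ with $v$ removed, i.e.\ the two facts above. It needs an invariant on the online loop: the two runs differ by at most one alternating path at any time, and the set of matched offline vertices only grows. This is exactly the kind of combinatorial argument the earlier formalisation~\cite{RankingIsabelle} developed, via its \shiftsto, \zig and \zag machinery. I would first try to reuse those lemmas, restating them for rank values in place of permutation positions. The measure-theoretic integration step is routine by comparison, since it needs only Fubini and the integral of $e^{y-1}$.
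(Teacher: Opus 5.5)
Your proposal is correct and follows essentially the same route as the paper: replace the random permutation by i.i.d.\ uniform priorities $Y_v$, charge each matched edge with $g(Y_v)/F$ and $(1-g(Y_v))/F$ for $F = 1-\frac{1}{e}$, show the expected dual is feasible, and conclude by weak duality. Your deletion argument with threshold $y^*$ and $g(y)=e^{y-1}$ is the standard Devanur--Jain--Kleinberg step that the paper's sketch leaves implicit, and the permutation-versus-priorities coupling you flag is exactly the step the paper's footnote glosses over.
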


The main idea of the primal-dual analysis of RANKING is as follows: we have a primal LP relaxation of bipartite matching $\max \lbrace x_{\matching}. \, Ax_{\matching} \leq 1, x_{\matching}  \geq 0\rbrace$ and its dual $\min \lbrace 1^T\pi. \, A^T\pi \geq 1, \pi \geq 0\rbrace$, i.e.\ find $\pi$ with minimum $\pi(\vertices)$ s.t.\ $\forall u\in\lparty, v\in\rparty, \uv \in \graph.\; \pi(u) + \pi (v) \geq 1$ and $\forall v \in \vertices.\; \pi (v) \geq 0$.
Now, as mentioned earlier, a primal-dual analysis of an optimisation algorithm usually proceeds by proving a relationship between the primal objective and the dual objective.
Proving this relation for RANKING is challenging as we need to tackle the difficulty with randomisation in RANKING, which makes it a non-standard primal-dual analysis.
We can only reason about the expectation of the algorithm's output and thus we can only reason about the relationship between expected values of the primal and dual LPs.

Both LPs have to be related to or derived from the output, as in case of the two offline algorithms discussed earlier.
The primal LP still corresponds to the computed matching \matching, but now the main challenge is connecting the dual LP to \matching: a different approach specific to the online setting is needed.
In particular, the potential has to include information about the ordering of the offline vertices after they are permuted.
To reason about the dual LP's expected objective, we replace the permutation step with choosing a real-numbered priority $Y_i \in [0, 1]$ for each offline vertex $i$.
This is to allow for the expectation to be over an integrable function.
Then each constraint in the dual LP is instantiated as follows: $\pi(u)= g(Y_i)/F$ and $\pi(v) = (1 - g(Y_i))/F$ for $g: [0, 1] \to [0, 1]$ that is monotone and $0 < F \le 1$, for every $u\in\lparty$, $v\in\rparty$, and $\{u,v\}\in\graph$.
This allows us to formalise the following reasoning:
\begin{align*}
    \expect[x_\matching] &= \expect[1^T \cdot (\frac{1}{F}\cdot \pi)] \quad \text{(by construction of the primal and dual LPs)} \\
    &= (1^T \cdot \expect[\pi]) /F\quad \text{(linearity of expectation)} \\
    &\ge 1 \cdot x^* \cdot F \quad \text{(by weak duality, where } x^* \text{ is an optimal primal solution)} \\
    &= 1 \cdot |\matching^*| \cdot F \quad \text{(by definition of the primal LP)}
\end{align*}
for any max-card matching $\matching^*$. The last step is choosing $g$ and $F$, where we set $F = 1 - 1/e$, thus proving the bound.\footnote{The expectations are taken over the distribution of priorities $Y$. A step we gloss over here is the equivalence of expectations over permutations and priorities.}

This approach has two advantages:
it is much simpler than the combinatorial approach as formalised by Abdulaziz and Madlener as the primal-dual-based proof is less than one half (3K lines) of the combinatorial one.
Moreover, it provides a general framework for reasoning about online matching algorithms.
We used our development to (a) formalise the competitive analysis for the vertex-weighted variant of online matching~\cite{onlineVertexWeightedMatching}, where the primal and dual LPs are adapted to include the weights with the entire analysis remaining unchanged.
(b) With some changes, we did a primal-dual analysis of the Adwords algorithm~\cite{AdWords2007}, which solves a variant of online b-matching, that models the assignment of keywords to advertisers in the context of search engines.
Interested readers should refer to the formalisations.

\section{Discussion}
There is a rich literature on the formal analysis of algorithms, including approximation algorithms~\cite{nipkowApprox}, matching algorithms~\cite{BlossomAlgoIsabelle}, flow algorithms~\cite{FlowsIsabelleITP,LammichFlows}.
Our work here, however, addresses a large gap in the literature, namely, the formalisation of primal-dual analyses of algorithms.
We primarily focus on matching algorithms here, covering classical and modern results.
Our formalisations, which are around 14K lines, cover a large variety of reasoning styles: the Hungarian Method, which is an executable practical algorithm, and variants of online matching algorithms, which are primarily used to theoretically analyse online markets rather than to solve practical problems. 

PD-based reasoning about the correctness of algorithms is mostly algebraic and thus leads to simpler, shorter and more textbook-style proofs.
The formalised combinatorial arguments for the correctness of RANKING~\cite{RankingIsabelle} and Berge's Lemma for the blossom algorithm~\cite{BlossomAlgoIsabelle} are of much higher complexity, come with extensive case analyses and are harder to understand than the PD-arguments for the HM and RANKING, for example.
We believe PD-based reasoning can also simplify correctness arguments for minimum cost flows, for which there was only a combinatorial proof formalised so far~\cite{FlowsIsabelleITP}.
According to standard literature~\cite{KorteVygenOptimisation,schrijverBook}, maintenance of dual variables even leads to faster algorithms for this problem.

However, a big missing part in our work that we aim to contribute to a library~\cite{IGL} is the analysis of primal-dual approximation algorithms: these indeed form the majority of applications of the primal-dual paradigm in theoretical computer science.
Our immediate plans are to formalise algorithms for approximating MaxSAT, set cover, load balancing, and Steiner trees, all of which are milestones in the theory of approximation algorithms.

\bibliography{bpmatching,long_paper}

\end{document}